\documentclass[pra,twocolumn,showpacs,preprintnumbers,amsmath,amssymb,superscriptaddress]{revtex4}

\usepackage{bm}
\usepackage{graphicx}
\usepackage{amsbsy}
\usepackage{amsmath}
\usepackage{amsfonts}
\usepackage{amsthm}

\begin{document}

\theoremstyle{plain}
\newtheorem{theorem}{Theorem}
\newtheorem{lemma}[theorem]{Lemma}
\newtheorem{corollary}[theorem]{Corollary}
\newtheorem{conjecture}[theorem]{Conjecture}
\newtheorem{proposition}[theorem]{Proposition}

\theoremstyle{definition}
\newtheorem{definition}{Definition}

\theoremstyle{remark}
\newtheorem*{remark}{Remark}
\newtheorem{example}{Example}

\title{Entanglement of Subspaces and Error Correcting Codes}   
\author{Gilad Gour}\email{gour@math.ucalgary.ca}
\affiliation{Institute for Quantum Information Science and 
Department of Mathematics and Statistics,
University of Calgary, 2500 University Drive NW,
Calgary, Alberta, Canada T2N 1N4} 
\author{Nolan R. Wallach}\email{nwallach@ucsd.edu}
\affiliation{Department of Mathematics, University of California/San Diego, 
        La Jolla, California 92093-0112}

\date{\today}

\begin{abstract} 
We introduce the notion of \emph{entanglement of subspaces} as a measure that quantify the entanglement of
bipartite states in a randomly selected subspace. We discuss its properties and in particular we show
that for maximally entangled subspaces it is additive. Furthermore, we show that maximally entangled subspaces
can play an important role in the study of quantum error correction codes.
We discuss both degenerate and non-degenerate codes and show that the subspace spanned
by the logical codewords of a non-degenerate code is a $k$-totally (maximally) entangled
subspace. As for non-degenerate codes, we provide a mathematical definition in terms of subspaces
and, as an example, we analyze Shor's nine qubits code in terms of 22 mutually orthogonal subspaces.
\end{abstract}  

\pacs{03.67.Mn, 03.67.Hk, 03.65.Ud}

\maketitle

\section{Introduction and definitions}

Bipartite entanglement has been recognized as a crucial resource for quantum information processing tasks
such as teleportation~\cite{Ben93} and super dense coding~\cite{Ben92}. 
As a result, in the last years there has been an 
enormous effort to understand and study the characterization, manipulation and quantification of bipartite entanglement~\cite{Ple07}.  
Yet, despite a great deal of progress that was achieved, the theory on mixed bipartite entanglement is 
incomplete and a few central important questions such as the additivity of the entanglement of formation~\cite{Sho04} remained open.  
Perhaps the richness and complexity of mixed bipartite entanglement can be found in the fact that a finite set of 
measures of entanglement is insufficient to completely quantify it~\cite{Gou05}. 
In this paper we shed some light on mixed bipartite entanglement with the introduction of a new kind of measure of entanglement which we call entanglement of subspaces (EoS). We will see that EoS can play an important role in the study of quantum error correcting codes (QECC).

It has been shown recently~\cite{Hay04,Hay06} that geometry of high-dimensional vector spaces can be counterintuitive especially when subspaces with very unique properties are more common than one intuitively expects. That is, roughly 
speaking, if a high dimensional subspace is selected randomly it is quite likely to have strange properties. For example,
in~\cite{Hay06} it has been demonstrated that a randomly chosen subspace of a bipartite quantum system
will likely contain nothing but nearly maximally entangled states even if the dimension of the subspace is almost of the same order
as the dimension of the original system. This kind of result has implications, in particular, to super-dense coding~\cite{Abe06} 
and for quantum communication in general (see also~\cite{Sho02} for other implications of randomly selected subspaces). 
The quantification of the entanglement of such subspaces is therefore very important and we start with its definition.

\begin{definition}
Let $\mathcal{H}^\text{A}$ and $\mathcal{H}^\text{B}$ be finite dimensional Hilbert spaces and
let $W^\text{AB}$ be a subspace of $\mathcal{H}^\text{A}\otimes\mathcal{H}^\text{B}$.
The entanglement of $W^\text{AB}$ is defined as:
\begin{equation}
\mathcal{E}\left(W^\text{AB}\right)\equiv\min_{\psi^\text{AB}\in W^\text{AB}}
\Big\{E\left(\psi^\text{AB}\right)\;:\;\;\|\psi^\text{AB}\|=1\;\Big\}\;,
\label{def1}
\end{equation}
where $E\left(\psi^\text{AB}\right)$ is the entropy of entanglement of $\psi^\text{AB}$. 
\end{definition}
Note that if the subspace $W^\text{AB}$ contains a product state then $\mathcal{E}(W^\text{AB})=0$.
On the other hand, if, for example, $W^\text{AB}$ is orthogonal to a subspace spaned by an unextendible
product basis (UPB)~\cite{Ben99,DiV03} then $\mathcal{E}(W^\text{AB})>0$.\\
\\
\textbf{Claim}: Let $d_A=\dim\mathcal{H}^\text{A}$ and $d_B=\dim\mathcal{H}^\text{B}$.
If $\mathcal{E}(W^\text{AB})>0$ then
\begin{equation}
\dim W^\text{AB}\leq (d_\text{A}-1)(d_\text{B}-1).
\label{dim}
\end{equation}
This claim follows from~\cite{Wal02} and also related to the fact that the number of (bipartite) states
in a UPB is at least $d_\text{A}+d_\text{B}-1$~\cite{Ben99}.  Note that for two qubits (i.e. $d_A=d_B=2$)
$\mathcal{E}(W^\text{AB})$ can be greater than zero only for one dimensional subspaces. 

We can use Eq.~(\ref{def1}) to define another measure of entanglement on bipartite mixed
states. 
\begin{definition}
Let $\rho\in\mathcal{B}\left(\mathcal{H}^\text{A}\otimes\mathcal{H}^\text{B}\right)$ be
a bipartite mixed state and let $S_{\rho}^\text{AB}$ be the support subspace of $\rho$.
Then, the entanglement of the support of $\rho$ is defined as
$$
E_\text{Support}(\rho)\equiv\mathcal{E}(S_{\rho}^\text{AB})\;. 
$$
\end{definition}
It can be easily seen that this measure is not continuous
and therefore can not be considered as a proper measure of entanglement. Nevertheless,
this measure can serve as a mathematical tool to find lower bounds for other measures of
entanglement that are more difficult to calculate especially in higher dimensions.  
For example, the entanglement of the support
of $\rho$ provides a lower bound for the entanglement of formation. It can be shown that in lower
dimensions the bound is generally not tight. For example, for two qubits in a \emph{mixed} state
$\rho$, the entanglement of the support $E_\text{Support}(\rho)=0$ (see Eq.~(\ref{dim})). On the other
hand, in higher dimensions the bound can be very tight~\cite{Hay04,Hay06}.  

\section{Entanglement of Subspaces}

In this section we study some of the properties of EoS with a focus
on additivity properties. The EoS provides a lower bound on the entanglement of formation
and our interest in its additivity properties is due to one of the most important
unresolved questions in quantum information, namely
the additivity conjecture for the entanglement of formation.
In particular, the additivity question of EoS
is identical to the additivity conjecture of quantum channel output entropy~\cite{AG07}
that has been shown to be equivalent to the additivity conjecture of entanglement of formation~\cite{Sho04}. 
Thus, additivity properties of EoS can shed some light on this topic.

\subsection{Additivity properties of the entanglement of subspaces}

Here we consider the additivity properties of EoS. 
We start by showing that if $U^\text{AB}$ and $V^{\text{A}'\text{B}'}$ are two subspaces
such that $\mathcal{E}(U^\text{AB})>0$ and/or $\mathcal{E}(V^{\text{A}'\text{B}'})>0$ then
$\mathcal{E}(U^\text{AB}\otimes V^{\text{A}'\text{B}'})>0$.
 
Consider $W=\mathbb{C}^{n}\otimes\mathbb{C}^{m}$. 
Let $e_{j}$, $j=1,...,n$ be the standard basis of
$\mathbb{C}^{n}$. We will also use the notation $f_{j}$ for the standard basis
of $\mathbb{C}^{m}.$ An element of a tensor product of two vector spaces, $A$
and $B$ will be called a product if it is of the form $a\otimes b$ 
with $a\in A$ and $b\in B$.

\begin{proposition}
Let $u_{1},...,u_{d},v_{1},...,v_{d}\in W$ be such that if 
$x=\sum_{i} b_{i}v_{i}$ is a product then $x=0$. 
If $z=\sum_{i} u_{i}\otimes v_{i}$ 
is a product in 
$\left(\mathbb{C}^{n} \otimes \mathbb{C}^{n}\right) \otimes \left(\mathbb{C}^{m} \otimes \mathbb{C}^{m}\right)$ then $z=0$.
\end{proposition}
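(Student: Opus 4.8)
The plan is to exploit the product hypothesis on the $v_i$ by contracting away exactly the two tensor slots carried by the $u_i$. Identify $A=A'=\mathbb{C}^{n}$ and $B=B'=\mathbb{C}^{m}$, so that $u_i\in A\otimes B$ and $v_i\in A'\otimes B'$, and $z=\sum_i u_i\otimes v_i$ sits in $(A\otimes B)\otimes(A'\otimes B')$, the relevant notion of ``product'' being a product across the $(AA'\,|\,BB')$ cut, i.e. $z=P\otimes Q$ with $P\in A\otimes A'$ and $Q\in B\otimes B'$. First I would pick arbitrary functionals $\mu\in A^{*}$ and $\nu\in B^{*}$ and apply $\mu\otimes\nu$ to the $A$ and $B$ slots of $z$, namely the slots occupied by the $u_i$. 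Since $(\mu\otimes\nu)(u_i)$ is a scalar, this contraction sends $z$ to $\sum_i(\mu\otimes\nu)(u_i)\,v_i\in A'\otimes B'$, a linear combination of the $v_i$.

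The key step is to check what this same contraction does to a product $z=P\otimes Q$. Because $\mu$ hits one factor of the $AA'$-side and $\nu$ one factor of the $BB'$-side, the contraction factorizes as $\bigl[(\mu\otimes\mathrm{id}_{A'})P\bigr]\otimes\bigl[(\nu\otimes\mathrm{id}_{B'})Q\bigr]$, which is again a product, now inside $A'\otimes B'$. Thus for every $\mu$ and $\nu$ the vector $\sum_i(\mu\otimes\nu)(u_i)\,v_i$ is a product in $A'\otimes B'$. By the hypothesis on the $v_i$, that any product lying in their span must vanish, each such combination is forced to be $0$.

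Finally I would remove the functionals. Product functionals $\mu\otimes\nu$ span all of $(A\otimes B)^{*}$, so by linearity $(\lambda\otimes\mathrm{id})(z)=0$ for every $\lambda\in(A\otimes B)^{*}$. A tensor annihilated by all contractions of its left factor is zero: writing $z=\sum_k x_k\otimes w_k$ with the $w_k$ linearly independent, the condition $\sum_k\lambda(x_k)\,w_k=0$ for all $\lambda$ forces every $x_k=0$, hence $z=0$. Note that this argument never assumes the $v_i$ themselves are linearly independent, so any linear dependence among them causes no trouble.

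The one genuinely delicate point is choosing the right slots to contract. Collapsing precisely the two $u$-slots, one from each side of the $(AA'\,|\,BB')$ cut, is what converts the global product structure of $z$ into a product within $A'\otimes B'$, so that the entanglement hypothesis on the span of the $v_i$ applies directly. Contracting any other pair of slots would leave a bipartition mixing $u$- and $v$-data, and the hypothesis would no longer bite. Everything else, the spanning of $(A\otimes B)^{*}$ by product functionals and the vanishing-under-all-contractions fact, is routine.
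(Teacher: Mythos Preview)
Your proof is correct and is essentially the paper's argument in coordinate-free language: where the paper expands in bases $e_k$, $f_m$ and shows that each coefficient combination $\sum_i u_{ik}^{m}\,v_i$ equals the product $(\sum_l z_{kl}e_l)\otimes w_m$, you do the same thing by contracting with $\mu=e_k^{*}$ and $\nu=f_m^{*}$ to obtain $\sum_i(\mu\otimes\nu)(u_i)\,v_i=[(\mu\otimes\mathrm{id})P]\otimes[(\nu\otimes\mathrm{id})Q]$. The final step---summing back over $k,m$ in the paper, or invoking that product functionals span $(A\otimes B)^{*}$ in yours---is likewise the same idea.
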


\begin{proof}
We write $u_{i}=\sum_{j=1}^{n}e_{j} \otimes u_{ij}$ 
and $v_{j}=\sum_{j=1}^{n}e_{j} \otimes v_{ij}$. 
Assume that ${\sum_{i}}u_{i}{\otimes}v_{i}$ 
is a product in 
$(\mathbb{C}^{n}{\otimes}\mathbb{C}^{n}){\otimes}(\mathbb{C}^{m}{\otimes}\mathbb{C}^{m})$. 
This means that there exists 
$z\in\mathbb{C}^{n}{\otimes}\mathbb{C}^{n}$ and $w\in\mathbb{C}^{m}{\otimes}\mathbb{C}^{m}$ such that
\[
{\sum_{i,k,l}}\left(  e_{k}{\otimes}e_{l}\right)
{\otimes}(u_{ik}{\otimes}v_{il})=z{\otimes}w.
\]
If we write out 
$z={\sum_{k,l}}z_{kl}e_{k}{\otimes}e_{l}$ 
with $z_{kl}\in\mathbb{C}$ then we must have
\[
{\sum_{i}}u_{ik}{\otimes}v_{il}=z_{kl}w
\]
for all $k,l$. \ We now write 
$u_{ik}={\sum_{m}}u_{ik}^{m}f_{m}$ and $w={\sum_{m}}f_{m}{\otimes}w_{m}$. 
The displayed formula now implies that ($k.l,m$ fixed)
\[
{\sum_{i}}
u_{ik}^{m}v_{il}=z_{kl}w_{m}.
\]
This implies that (with $k$ and $m$ fixed) we have
\[
{\sum_{il}}u_{ik}^{m}e_{l}
{\otimes}v_{il}=({\sum_{l}}z_{kl}e_{l})
{\otimes}w_{m}.
\]
Hence 
${\sum_{i}}u_{ik}^{m}v_{i}$ is a product. Our assumption implies that it must be $0$.
Hence
\[
0={\sum_{i,k,m}}u_{ik}^{m}e_{k}
{\otimes}f_{m}{\otimes}
v_{i}={\sum_{i}}u_{i}{\otimes}v_{i}.
\]
As was to be proved.
\end{proof}

Note that the proposition above states that if none of the decompositions of a bipartite mixed
state, $\rho$, contain a product state, then also none of the decompositions of 
$\rho\otimes\sigma$ ($\sigma$ is a bipartite mixed state) contain a product state.
This property is related to the additivity conjecture~\cite{Sho04} for the entanglement of formation
(and other measures) and one of the main questions that we will consider here is 
wether the EoS is additive. That is, does 
$$
\mathcal{E}(U^\text{AB}\otimes V^{\text{A}'\text{B}'})=\mathcal{E}(U^\text{AB})
+\mathcal{E}(V^{\text{A}'\text{B}'})\;?
$$
Clearly, if the EoS were additive then the proposition above would have been
a trivial consequence of that. However, we were not able to prove the additivity of EoS (in general) 
although for some special cases it has been tested numerically in~\cite{HarXX}  and no counter example 
has been found. The proposition below provides a lower bound.
\begin{proposition}
Let $N=\min\{\dim U^\text{AB},\; \dim V^{\text{A}'\text{B}'}\}$. Then
\begin{equation}
\mathcal{E}(U^\text{AB})+\mathcal{E}(V^{\text{A}'\text{B}'})-\log N
\leq\mathcal{E}\left(U^\text{AB}\otimes V^{\text{A}'\text{B}'}\right)\;.
\end{equation} 
\end{proposition}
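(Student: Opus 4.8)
The plan is to take a normalized state $\psi$ that attains the minimum in $\mathcal{E}\left(U^\text{AB}\otimes V^{\text{A}'\text{B}'}\right)$ (the minimum is attained by compactness of the unit sphere of the subspace) and to bound its entanglement from below across the natural bipartition, with $A$ and $A'$ on one side and $B$ and $B'$ on the other. First I would write the Schmidt decomposition of $\psi$ with respect to the factorization $\left(\mathcal{H}^\text{A}\otimes\mathcal{H}^\text{B}\right)\otimes\left(\mathcal{H}^{\text{A}'}\otimes\mathcal{H}^{\text{B}'}\right)$, namely $\psi=\sum_i\sqrt{\lambda_i}\,u_i\otimes v_i$ with $\{u_i\}\subset U^\text{AB}$ and $\{v_i\}\subset V^{\text{A}'\text{B}'}$ orthonormal and with at most $N=\min\{\dim U^\text{AB},\dim V^{\text{A}'\text{B}'}\}$ nonzero terms. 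The Schmidt vectors lie in the respective subspaces, and the Schmidt rank is bounded by $N$, because $\psi\in U^\text{AB}\otimes V^{\text{A}'\text{B}'}$ forces the supports of the reduced states $\rho_{AB}$ and $\rho_{A'B'}$ to sit inside $U^\text{AB}$ and $V^{\text{A}'\text{B}'}$.

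Writing $S(\cdot)$ for the von Neumann entropy of a reduced state of $\psi$, the quantity to be bounded is $\mathcal{E}\left(U^\text{AB}\otimes V^{\text{A}'\text{B}'}\right)=E(\psi)=S(\rho_{AA'})$. The key step is to apply strong subadditivity in the instance $S(\rho_{AA'B'})+S(\rho_{A'})\le S(\rho_{AA'})+S(\rho_{A'B'})$, which rearranges to $S(\rho_{AA'})\ge S(\rho_{AA'B'})+S(\rho_{A'})-S(\rho_{A'B'})$. Since $\psi$ is pure on the four factors, $S(\rho_{AA'B'})=S(\rho_B)$; and from the Schmidt form one has $\rho_{AB}=\sum_i\lambda_i|u_i\rangle\langle u_i|$ and $\rho_{A'B'}=\sum_i\lambda_i|v_i\rangle\langle v_i|$, so $S(\rho_{A'B'})=H(\{\lambda_i\})\le\log N$ because there are at most $N$ nonzero Schmidt coefficients.

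It then remains to show $S(\rho_B)\ge\mathcal{E}(U^\text{AB})$ and $S(\rho_{A'})\ge\mathcal{E}(V^{\text{A}'\text{B}'})$. For the first I would trace $\rho_{AB}$ over $A$ and use concavity of the entropy: $S(\rho_B)=S\big(\sum_i\lambda_i\,\mathrm{Tr}_A|u_i\rangle\langle u_i|\big)\ge\sum_i\lambda_i\,S(\mathrm{Tr}_A|u_i\rangle\langle u_i|)=\sum_i\lambda_i\,E(u_i)\ge\mathcal{E}(U^\text{AB})$, the final inequality holding because each normalized $u_i$ lies in $U^\text{AB}$. The bound $S(\rho_{A'})\ge\mathcal{E}(V^{\text{A}'\text{B}'})$ follows identically from $\rho_{A'B'}$. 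Combining the three estimates yields $S(\rho_{AA'})\ge\mathcal{E}(U^\text{AB})+\mathcal{E}(V^{\text{A}'\text{B}'})-\log N$, which is the claim.

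I expect the main obstacle to be selecting the correct entropic inequality rather than any calculation. The naive route — plain subadditivity of $S(\rho_{AA'})$, or bounding the mutual information $I(A{:}A')$ by discarding $B$ and $B'$ separately — only produces a penalty of $2\log N$. The improvement to a single $\log N$ comes precisely from routing the argument through strong subadditivity, so that the Schmidt entropy $H(\{\lambda_i\})$ enters the estimate only once, together with the observation that this correction is controlled by the Schmidt rank across the $(AB)\,|\,(A'B')$ cut rather than by the local dimensions.
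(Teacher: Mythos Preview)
Your argument is correct and matches the paper's proof essentially step for step: Schmidt-decompose across the $(AB)\,|\,(A'B')$ cut, apply strong subadditivity once so that $H(\{\lambda_i\})\le\log N$ enters only once, and use concavity of the entropy on each marginal to pull out $\mathcal{E}(U^{\text{AB}})$ and $\mathcal{E}(V^{\text{A}'\text{B}'})$. The only cosmetic difference is that the paper states the entropic inequality in its weak-monotonicity form $S(\rho_{\text{A}'})+S(\rho_{\text{B}})\le S(\rho_{\text{AB}})+S(\rho_{\text{AA}'})$, whereas you write the equivalent standard SSA instance on $AA'B'$ and then invoke purity to identify $S(\rho_{AA'B'})=S(\rho_B)$ and $S(\rho_{A'B'})=S(\rho_{AB})$.
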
 
The equation above provides a lower bound whereas the upper bound 
$\mathcal{E}\left(U^\text{AB}\otimes V^{\text{A}'\text{B}'}\right)\leq\mathcal{E}(U^\text{AB})+\mathcal{E}(V^{\text{A}'\text{B}'})$ 
follows directly from the definition of EoS.
Thus, for $N=1$ the EoS is additive. 
Note also that even if $N$ is small (e.g. $N=2$), $\mathcal{E}(U^\text{AB})$ and 
$\mathcal{E}(V^{\text{A}'\text{B}'})$ can be arbitrarily large (i.e. depending on $d_A$ and $d_B$ but not on $N$).

\begin{proof}
Let $\chi$ be a normalized vector in $U^\text{AB}\otimes V^{\text{A}'\text{B}'}$. We can write $\chi$ in its
Schmidt decomposition as follows:
$$
\chi=\sum_{i}\sqrt{p_{i}}u_{i}^\text{AB}\otimes v_{i}^{\text{A}'\text{B}'}\;,
$$ 
where $\sum_{i}p_i=1$ ($p_i \geq 0$) and the $u_{i}^\text{AB}$'s ($v_{i}^{\text{A}'\text{B}'}$'s)
are orthonormal. Now, from the strong subadditivity of the von-Neumann entropy we have
$$
S(\rho_{\text{A}'})+S(\rho_\text{B})\leq S(\rho_\text{AB})+S(\rho_{\text{AA}'})\;,
$$
where $\rho_\text{A}\equiv {\rm Tr}_{\text{A}'\text{BB}'}\chi\otimes \chi^{*}$,
$\rho_\text{B}\equiv {\rm Tr}_{\text{AA}'\text{B}'}\chi\otimes \chi^{*}$, etc.
Now, note that $S(\rho_{\text{AA}'})=E(\chi)$ and $S(\rho_\text{AB})=H(\{p_i\})\leq\log N$,
where $H(\{p_i\})$ is the Shanon entropy. Furthermore, note that
$$
\rho_{\text{A}'}=\sum_{i}p_i \omega _i\;\;{\rm and}\;\;\rho_\text{B}=\sum_{i}p_i \sigma _i
$$
where
$$
\omega_{i}\equiv {\rm Tr}_{\text{B}'}v_{i}^{\text{A}'\text{B}'}\otimes {v_{i}^{\text{A}'\text{B}'}}^{*}\;\;{\rm and}\;\;
\sigma_{i}\equiv {\rm Tr}_\text{A}u_{i}^\text{AB}\otimes {u_{i}^\text{AB}}^{*}\;.
$$
Hence, since the von-Neumann entropy is concave we have
$$
S(\rho_{\text{A}'})\geq\sum_{i}p_i S(\omega_i)=\sum_{i}p_i E(v_{i}^{\text{A}'\text{B}'})
\geq \mathcal{E}\left(V^{\text{A}'\text{B}'}\right)\;.
$$
and similarly $S(\rho_\text{B})\geq\mathcal{E}\left(U^\text{AB}\right)$. Combining all this
we get
$$
\mathcal{E}\left(V^{\text{A}'\text{B}'}\right)+\mathcal{E}\left(U^\text{AB}\right)\leq\log N+E(\chi)\;,
$$
for all $\chi\in U^\text{AB}\otimes V^{\text{A}'\text{B}'}$. This complete the proof.
\end{proof}

\subsection{Maximally entangled subspaces}

As we have seen above, if $N=1$ then the EoS is clearly additive. As we will see
in the next subsection, it is also additive for maximally entangled subspaces: 
\begin{definition}
Let $W$ be a subspace of $\mathcal{H}^\text{A}\otimes\mathcal{H}^\text{B}$
and let $d_\text{A}=\dim \mathcal{H}^\text{A}$ and $d_\text{B}=\dim \mathcal{H}^\text{B}$.
$W$ is said to be a maximally entangled subspace in
$\mathcal{H}^\text{A}\otimes\mathcal{H}^\text{B}$ if
\begin{equation}
\mathcal{E}(W)=\log m\;,
\end{equation}
where $m\equiv\min \{d_\text{A},\;d_\text{B}\}$.
\end{definition}
The term maximally entangled subspace have been used in~\cite{Hay04,Hay06} for
a subspace $W$ with $\mathcal{E}(W)$ slightly less than $\log m$. In this paper, 
we will call such subspaces nearly maximally entangled to distinguish
from (exactly) maximally entangled subspaces as defined above.

In~\cite{Pag93}  it has been shown that the average entanglement of a pure state 
$\phi\in \mathcal{H}^\text{A}\otimes\mathcal{H}^\text{B}$ which is chosen randomly 
according to the unitarily invariant measure satisfies
$$
\langle E(\phi)\rangle \geq \log_{2}d_\text{A}-\frac{d_\text{A}}{2\ln 2 d_\text{B}}
$$
where without loss of generality $d_\text{A}\geq d_\text{B}$. Later on, in~\cite{Hay04,Hay06}
this result has been extended to subspaces and in particular it has been shown, 
somewhat surprisingly, that a randomly chosen subspace of bipartite quantum 
system will likely be a nearly maximally entangled subspace. Thus, as nearly 
maximally entangled subspaces are quite common it is important to understand 
their structure. As a first step in this direction, in the following we study the 
structure of (exactly) maximally entangled subspaces. 

Let $\phi$ be a state in $\mathcal{H}^\text{A}{\otimes}\mathcal{H}^\text{B}$.
If $e_{1},...,e_{m}$ is an orthonormal basis of $\mathcal{H}^\text{B}$ we may
write
$$
\phi=\sum_{i=1}^{d_{B}}\phi_{i}{\otimes}e_{i}\;.
$$
We define a $d_\text{B}\times d_\text{B}$ Hermitian matrix 
$B=\left[  \left\langle\phi_{i}|\phi_{j}\right\rangle \right]$
(i.e. $B$ is the reduced density matrix). 
Let $\lambda_{1},...,\lambda_{d_\text{B}} $ be the set of 
eigenvalues of $B$ counting multiplicity. Then the
entanglement of $\phi$  is
$$
E(\phi)=-{\sum_{i}}\lambda_{i}\log(\lambda_{i})\;.
$$
It is easy to show that $E(\phi)\leq\log m$
and equality is attained if and only if $B=\frac{1}{m}P$ with $P$ a projection
matrix onto a $d$ dimensional subspace of $\mathbb{C}^{d_\text{B}}$. Clearly
this definition of entropy is independent of the choice of basis and could
also be given using an orthonormal basis of $\mathcal{H}^\text{A}$ and analyzing the
corresponding $d_\text{A}$ coefficients in $\mathcal{H}^\text{B}$. Under the
condition of equality $\phi$ is maximally entangled, and this in particular
implies that if $d_\text{A}\geq d_\text{B}$ then
$$
\left\langle \phi_{i}|\phi_{j}\right\rangle =\frac{1}{d_\text{B}}\delta_{ij}.
$$

\begin{proposition}
Assume that $d_\text{A}\geq d_\text{B}$ and set $m=d_{B}$. Let $U^\text{AB}$ be a maximally
entangled subspace in $\mathcal{H}^\text{A}{\otimes}\mathcal{H}^\text{B}$ of dimension $d$. 
If $e_{1},...,e_{m}$ is an orthonormal basis of $\mathcal{H}^\text{B}$ then there
exist $U_{1},...,U_{m}$ subspaces of $\mathcal{H}^\text{A}$ such that 
$\left\langle U_{i}|U_{j}\right\rangle =0$ if $i\neq j$, $\dim U_{j}=d>0$ for all
$j=1,...,m$ and unitary operators $T_{i}:\mathbb{C}^{d}\rightarrow U_{i}$
$i=1,...,m$ such that
$$
U^\text{AB}=\{{\sum_{i}}T_{i}w{\otimes}e_{i}|w\in\mathbb{C}^{d}\}.
$$
Conversely, if $U_{1},...,U_{m}$ are mutually orthogonal subspaces of $\mathcal{H}^\text{A}$
such that $\dim U_{j}=d>0$ for all $j=1,...,m$ and we have unitary operators
$T_{i}:\mathbb{C}^{d}\rightarrow U_{i}$ $i=1,...,m$ such that
$$
U^\text{AB}=\{\sum_{i} T_{i}w{\otimes}e_{i}|w\in\mathbb{C}^{d}\}\;,
$$
then $U^\text{AB}$ is maximally entangled.
\label{structure}
\end{proposition}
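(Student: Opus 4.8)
The plan is to reduce both directions to the pointwise characterization of maximal entanglement recalled just before the statement: for $d_\text{A}\geq d_\text{B}=m$, a unit vector $\phi=\sum_{i=1}^{m}\phi_i\otimes e_i$ satisfies $E(\phi)=\log m$ if and only if $\langle\phi_i|\phi_j\rangle=\tfrac{1}{m}\delta_{ij}$. The two halves of the proposition are then, respectively, a direct computation and a polarization argument built on this fact.

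I would dispatch the converse first, since it is a one-line computation. Given mutually orthogonal $U_1,\dots,U_m$ of common dimension $d$ and unitaries $T_i:\mathbb{C}^d\to U_i$, any element of $U^\text{AB}$ has the form $\phi=\sum_i T_i w\otimes e_i$, so its components are $\phi_i=T_i w$. Then $\langle\phi_i|\phi_j\rangle=\langle T_i w|T_j w\rangle$ vanishes for $i\neq j$ by orthogonality of the $U_i$, and equals $\|w\|^2$ for $i=j$ since each $T_i$ is an isometry. Hence the reduced matrix $B$ is $\|w\|^2$ times the identity, so on normalizing ($\|\phi\|^2=m\|w\|^2=1$) we get $B=\tfrac{1}{m}I$ and $E(\phi)=\log m$ for every unit $\phi\in U^\text{AB}$; thus $\mathcal{E}(U^\text{AB})=\log m$.

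For the forward direction, the idea is to manufacture the subspaces and unitaries out of the coordinate maps $\pi_i:U^\text{AB}\to\mathcal{H}^\text{A}$, $\phi\mapsto\phi_i$, where $\phi=\sum_i\phi_i\otimes e_i$. Maximal entanglement of $U^\text{AB}$ forces every unit vector in it to be maximally entangled (the minimum defining $\mathcal{E}$ equals the maximal possible value $\log m$), so by the recalled characterization and homogeneity $\langle\phi_i|\phi_j\rangle=\tfrac{1}{m}\delta_{ij}\|\phi\|^2$ for all $\phi\in U^\text{AB}$. The crucial step is to upgrade this diagonal identity to the full sesquilinear statement $\langle\pi_i\phi|\pi_j\psi\rangle=\tfrac{1}{m}\delta_{ij}\langle\phi|\psi\rangle$ for all $\phi,\psi\in U^\text{AB}$; this follows by applying the complex polarization identity to the Hermitian form $(\phi,\psi)\mapsto\langle\pi_i\phi|\pi_j\psi\rangle$, whose diagonal we have just computed. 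From this identity I would read off that $U_i\equiv\pi_i(U^\text{AB})$ and $U_j\equiv\pi_j(U^\text{AB})$ are orthogonal for $i\neq j$, and that $\sqrt{m}\,\pi_i$ is an isometry of $U^\text{AB}$ onto $U_i$; in particular each $\pi_i$ is injective, so $\dim U_i=\dim U^\text{AB}=d$. Fixing a unitary $G:\mathbb{C}^d\to U^\text{AB}$ (an orthonormal basis of $U^\text{AB}$) and setting $T_i\equiv\sqrt{m}\,\pi_i\circ G$ yields unitaries $T_i:\mathbb{C}^d\to U_i$ with $\pi_i(G(w))=\tfrac{1}{\sqrt{m}}T_i w$; hence $G(w)=\tfrac{1}{\sqrt{m}}\sum_i T_i w\otimes e_i$, and letting $w$ range over $\mathbb{C}^d$ (absorbing the scalar into a rescaling of $w$) gives $U^\text{AB}=\{\sum_i T_i w\otimes e_i\mid w\in\mathbb{C}^d\}$.

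I expect the polarization step to be the only real subtlety: one must check that $(\phi,\psi)\mapsto\langle\pi_i\phi|\pi_j\psi\rangle$ is genuinely sesquilinear and that its values are pinned down by the diagonal over $\mathbb{C}$, so that the off-diagonal ($i\neq j$) blocks really vanish rather than merely having vanishing diagonal. The remaining bookkeeping—injectivity, equality of dimensions, and the harmless $1/\sqrt{m}$ normalization in the final set description—is routine.
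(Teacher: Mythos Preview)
Your proof is correct and follows essentially the same approach as the paper: both reduce to the componentwise criterion $\langle\phi_i|\phi_j\rangle=\tfrac{1}{m}\delta_{ij}$ and then polarize to obtain the off-diagonal orthogonality relations, with the paper carrying out the polarization explicitly (choosing two-entry coefficient vectors with $b=c=\tfrac{1}{\sqrt{2}}$ and $b=\tfrac{1}{\sqrt{2}},\,c=\tfrac{i}{\sqrt{2}}$) while you invoke the complex polarization identity abstractly. One terminological quibble: for $i\neq j$ the form $(\phi,\psi)\mapsto\langle\pi_i\phi|\pi_j\psi\rangle$ is sesquilinear but not Hermitian, though as you correctly note in your final paragraph this does not matter since over $\mathbb{C}$ any sesquilinear form is determined by its diagonal.
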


\begin{proof} 
Let $\psi_{1},...,\psi_{d}$ be an orthonomal basis of $U^\text{AB}$. Then we can write
$$
\psi_{j}={\sum_{i}}\psi_{ij}{\otimes}e_{i}
$$
with $\left\langle \psi_{ij}|\psi_{kj}\right\rangle =\frac{1}{m}\delta_{ik} $.
The condition on $U^\text{AB}$ is that if $a\in\mathbb{C}^{d}$ is a unit vector then 
$\sum a_{j}\psi_{j}$ is maximally entangled in 
$\mathcal{H}^\text{A}{\otimes}\mathcal{H}^\text{B}$. This implies that
$$
\left\langle{\sum_{j=1}^{d}}a_{j}\psi_{lj}\Big|{\sum_{j=1}^{d}}
a_{j}\psi_{kj}\right\rangle =\frac{1}{m}\delta_{l,k}\;.
$$
Fix $l\neq k$ and let $p\neq q\leq d$ be two integers. Let 
$a=(a_{1},...,a_{d})$ with $a_{j}=0$
for $j\neq p$ or $j\neq q$. Set $a_{p}=b,a_{q}=c$ and $|b|^{2}+|c|^{2}=1$. Then we
have
$$
\left\langle b\psi_{lp}+c\psi_{lq}|b\psi_{kp}+c\psi_{kq}\right\rangle =0.
$$
On the other hand we have
$$
\left\langle b\psi_{lp}+c\psi_{lq}|b\psi_{kp}+c\psi_{kq}\right\rangle
=\overline{b}c\left\langle \psi_{lp}|\psi_{kq}\right\rangle +\overline
{c}b\left\langle \psi_{lq}|\psi_{kp}\right\rangle
$$
Set $z=\overline{b}c$. We look at two cases: first $z=\frac{1}{2}$
($b=c=\frac{1}{\sqrt{2}})$ and second $z=\frac{i}{\sqrt{2}}$ ($b=\frac{1}{\sqrt{2}
},c=\frac{i}{\sqrt{2}}$). Thus we have
$$
\left\langle \psi_{lp}|\psi_{kq}\right\rangle +\left\langle \psi_{lq}
|\psi_{kp}\right\rangle =0
$$
for the first case and
$$
\left\langle \psi_{lp}|\psi_{kq}\right\rangle -\left\langle \psi_{lq}
|\psi_{kp}\right\rangle =0
$$
for the second.
Hence, $\left\langle \psi_{lp}|\psi_{kq}\right\rangle =\left\langle \psi
_{lq}|\psi_{kp}\right\rangle =0$. We set $U_{l}={\rm Span}\{\psi_{lp}|p=1,...,d\} $.
Then $\left\langle U_{l}|U_{k}\right\rangle =0$ if $l\neq k$. We now consider
what happens when $l=k$. We first note that taking $a_{p}=1$ and all other
entries equal to $0$ we have $\left\langle \psi_{lp}|\psi_{lp}\right\rangle
=\frac{1}{m}$. Now using $b,c$ as above for $p\neq q$ we have
$$
\left\langle \psi_{lp}|\psi_{lp}\right\rangle +\left\langle \psi_{lp}
|\psi_{lq}\right\rangle +\left\langle \psi_{lq}|\psi_{lp}\right\rangle
+\left\langle \psi_{kp}|\psi_{kp}\right\rangle =\frac{2}{m}
$$
and
$$
\left\langle \psi_{lp}|\psi_{lp}\right\rangle +i\left\langle \psi_{lp}
|\psi_{lq}\right\rangle -i\left\langle \psi_{lq}|\psi_{lp}\right\rangle
+\left\langle \psi_{kp}|\psi_{kp}\right\rangle =\frac{2}{m}.
$$
Hence as above we find that $\left\langle \psi_{lp}|\psi_{lq}\right\rangle =0$
if $p\neq q$. Thus $\sqrt{m}\psi_{l1},...,\sqrt{m}\psi_{ld}$ is an orthonormal
basis of $U_{l}$. This implies that the spaces $U_{1},...,U_{m}$ have the
desired properties. Let $u_{1},...,u_{d}$, be the standard orthonormal basis
of $\mathbb{C}^{d}$ and define $T_{i}u_{j}=\sqrt{m}\psi_{ij}$. With this
notation in place $U^\text{AB}$ has the desired form.  
The converse is proved by the obvious calculation.
\end{proof}

\begin{corollary}\label{cordim}
If $U^\text{AB}$ is a maximally entangled subspace in 
$\mathcal{H}^\text{A}{\otimes}\mathcal{H}^\text{B}$ ($d_\text{A}\geq d_\text{B}$), then
$$
\dim U^\text{AB} \leq \left\lfloor\frac{d_\text{A}}{d_\text{B}}\right\rfloor\;.
$$
Furthermore, there always exists a maximally entangled subspace of dimension
$\lfloor d_\text{A}/d_\text{B}\rfloor$.
\end{corollary}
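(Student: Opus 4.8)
The plan is to read off both halves directly from the structure theorem, Proposition~\ref{structure}, so that essentially all of the work has already been done. For the upper bound I would begin by applying that proposition to $U^\text{AB}$: since $U^\text{AB}$ is maximally entangled of dimension $d$ and here $m=d_\text{B}$, it furnishes mutually orthogonal subspaces $U_{1},\dots,U_{m}$ of $\mathcal{H}^\text{A}$, each with $\dim U_{j}=d$. The key observation is then a pure dimension count: mutual orthogonality makes $U_{1}\oplus\cdots\oplus U_{m}$ a genuine internal direct sum inside $\mathcal{H}^\text{A}$, so
$$
md={\sum_{j=1}^{m}}\dim U_{j}=\dim\big(U_{1}\oplus\cdots\oplus U_{m}\big)\leq d_\text{A}.
$$
With $m=d_\text{B}$ this yields $d\leq d_\text{A}/d_\text{B}$, and since $d$ is an integer we may replace the right-hand side by its floor to conclude $\dim U^\text{AB}=d\leq\lfloor d_\text{A}/d_\text{B}\rfloor$.

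For the existence statement I would invoke the converse half of Proposition~\ref{structure}. Put $d=\lfloor d_\text{A}/d_\text{B}\rfloor$, which is at least $1$ because $d_\text{A}\geq d_\text{B}$, and note that $md=d_\text{B}\lfloor d_\text{A}/d_\text{B}\rfloor\leq d_\text{A}$, so there is enough room in $\mathcal{H}^\text{A}$. Choosing an orthonormal basis $f_{1},\dots,f_{d_\text{A}}$ of $\mathcal{H}^\text{A}$, I would set
$$
U_{j}={\rm Span}\{f_{(j-1)d+1},\dots,f_{jd}\},\qquad j=1,\dots,m,
$$
which are manifestly mutually orthogonal and each of dimension $d$, and take $T_{i}:\mathbb{C}^{d}\rightarrow U_{i}$ to be the obvious unitary sending the standard basis of $\mathbb{C}^{d}$ to the chosen basis of $U_{i}$. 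The converse direction of Proposition~\ref{structure} then certifies that $U^\text{AB}=\{{\sum_{i}}T_{i}w\otimes e_{i}\mid w\in\mathbb{C}^{d}\}$ is maximally entangled.

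The only point needing a line of care is that this $U^\text{AB}$ genuinely has dimension $d$, so that it attains the bound. This is immediate: the map $w\mapsto{\sum_{i}}T_{i}w\otimes e_{i}$ is injective, since pairing the $\mathcal{H}^\text{B}$ factor against $e_{1}$ returns $T_{1}w$, and $T_{1}$ is unitary hence injective. I do not expect any genuine obstacle here—the substance resides entirely in the structure theorem, and what remains is an orthogonality-based dimension inequality in one direction and an explicit block construction in the other.
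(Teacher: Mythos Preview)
Your proposal is correct and follows essentially the same approach as the paper: both directions are deduced from Proposition~\ref{structure}, with the upper bound coming from the dimension count $d\cdot d_\text{B}\leq d_\text{A}$ forced by the mutually orthogonal $U_j\subset\mathcal{H}^\text{A}$, and existence from the converse half. You have simply spelled out the explicit block construction and the injectivity check that the paper leaves implicit.
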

\begin{proof}
Assume that $\dim\mathcal{H}^\text{A}=d_\text{A}\geq\dim\mathcal{H}^\text{B}=d_\text{B}$.
According to the first part of Proposition 4, if $U^\text{AB}$ is a maximally
entangled subspace of dimension $d$ then $d\times d_\text{B}\leq d_\text{A}$. On the
other hand, if $d\leq\left\lfloor d_\text{A}/d_\text{B}\right\rfloor $ then
the second half of the statement implies that there is a maximally entangled
subspace of dimension $d$.
\end{proof}

In the following we find necessary and sufficient conditions for a subspace to be maximally 
entangled. In section III we use this to show that maximally entangled subspaces can play an important
role in the study of error correcting codes. 
As above we consider the space $\mathcal{H}^\text{A}{\otimes}\mathcal{H}^\text{B}$
with $\dim\mathcal{H}^\text{A}=d_\text{A}\geq\dim\mathcal{H}^\text{B}=d_\text{B}$ and
a maximally entangled subspace $U^\text{AB}\subset\mathcal{H}^\text{A}{\otimes}\mathcal{H}^\text{B}$.
We will also consider $\text{End}(\mathcal{H}^\text{B})$ to be a Hilbert space with inner product 
$\langle X|Y\rangle=\text{Tr}(X^{\dag}Y)$ for any two operators $X$ and $Y$ in  
$\text{End}(\mathcal{H}^\text{B})$.

\begin{proposition}
\label{isometry}
Let $U^\text{AB}\subset\mathcal{H}^\text{A}{\otimes}\mathcal{H}^\text{B}$ be a subspace and 
$d_\text{A}\geq d_\text{B}$. Then,
$U^\text{AB}$ is maximally entangled if and only if the map 
$\text{End}(H^\text{B})\otimes U^\text{AB}\rightarrow \mathcal{H}^\text{A}{\otimes}\mathcal{H}^\text{B}$
given by $X\otimes u\mapsto\sqrt{d_\text{B}}(I\otimes X)u$ is an isometry onto its image. 
\end{proposition}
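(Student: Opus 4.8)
The plan is to show that both the isometry property and maximal entanglement are equivalent to one and the same collection of inner-product identities among the $\mathcal{H}^\text{A}$-components of vectors in $U^\text{AB}$. First I would unwind the isometry condition. A linear map preserves inner products iff it does so on a spanning set, so it suffices to evaluate on simple tensors $X\otimes u$ and $Y\otimes v$. Using the Hilbert--Schmidt inner product on $\text{End}(\mathcal{H}^\text{B})$ together with $(I\otimes X)^\dagger(I\otimes Y)=I\otimes X^\dagger Y$, the requirement reads $d_\text{B}\langle u|(I\otimes X^\dagger Y)v\rangle=\text{Tr}(X^\dagger Y)\langle u|v\rangle$. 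Since $X^\dagger Y$ runs over all of $\text{End}(\mathcal{H}^\text{B})$ (take $X=I$), the map is an isometry onto its image precisely when
\begin{equation*}
d_\text{B}\,\langle u|(I\otimes Z)v\rangle=\text{Tr}(Z)\,\langle u|v\rangle
\end{equation*}
for every $Z\in\text{End}(\mathcal{H}^\text{B})$ and every $u,v\in U^\text{AB}$.

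Next I would expand in the basis $e_1,\dots,e_m$. Writing $u=\sum_i u_i\otimes e_i$ and $v=\sum_i v_i\otimes e_i$ with $u_i,v_i\in\mathcal{H}^\text{A}$, and $Z_{ki}=\langle e_k|Ze_i\rangle$, the left-hand side equals $d_\text{B}\sum_{k,i}Z_{ki}\langle u_k|v_i\rangle$ while the right-hand side equals $(\sum_k Z_{kk})\sum_m\langle u_m|v_m\rangle$. Because the entries $Z_{ki}$ are free complex parameters, I can match coefficients: the displayed identity holds for all $Z$ iff, for all $u,v\in U^\text{AB}$,
\begin{equation*}
\langle u_k|v_i\rangle=0\ \ (k\neq i)\qquad\text{and}\qquad d_\text{B}\langle u_k|v_k\rangle=\langle u|v\rangle\ \ \text{for all }k.
\end{equation*}

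Then I connect this to maximal entanglement. By the reduced-density-matrix characterization recorded just before Proposition~\ref{structure} (and using $d_\text{A}\geq d_\text{B}$), a unit vector $\phi=\sum_i\phi_i\otimes e_i$ is maximally entangled iff $\langle\phi_k|\phi_i\rangle=\frac1{d_\text{B}}\delta_{ki}$; thus $U^\text{AB}$ is maximally entangled iff the two conditions above hold in the diagonal case $u=v=\phi$ for every $\phi\in U^\text{AB}$. The remaining, and really the only substantive, point is to upgrade these diagonal (quadratic) conditions to the full sesquilinear ones demanded by isometry. Here I would invoke complex polarization: for fixed $k\neq i$ the map $(u,v)\mapsto\langle u_k|v_i\rangle$, and for fixed $k$ the map $(u,v)\mapsto d_\text{B}\langle u_k|v_k\rangle-\langle u|v\rangle$, are sesquilinear forms on $U^\text{AB}$ (component extraction is linear), and over $\mathbb{C}$ a sesquilinear form whose associated quadratic form vanishes on the whole space vanishes identically. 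Hence the diagonal conditions (maximal entanglement) are in fact equivalent to the full bilinear conditions (isometry), establishing both implications at once.

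The main obstacle is precisely this last step: a priori the isometry condition is strictly stronger than maximal entanglement, since the former constrains every pair $u,v$ whereas the latter constrains only $\phi$ against itself; the complex polarization identity is what collapses the gap. Everything else is index bookkeeping, though one should keep the Hilbert--Schmidt normalization straight and may note the consistency with Corollary~\ref{cordim}: injectivity of the isometry forces $d_\text{B}^2\dim U^\text{AB}\leq d_\text{A}d_\text{B}$, i.e. $\dim U^\text{AB}\leq\lfloor d_\text{A}/d_\text{B}\rfloor$.
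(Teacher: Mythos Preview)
Your argument is correct. The key step---recovering the full sesquilinear identities $\langle u_k|v_i\rangle=\tfrac{1}{d_\text{B}}\delta_{ki}\langle u|v\rangle$ from their diagonal case via complex polarization---is exactly right, and the reduction of the isometry condition to $d_\text{B}\langle u|(I\otimes Z)v\rangle=\mathrm{Tr}(Z)\langle u|v\rangle$ is clean.

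Your route differs from the paper's in that you bypass Proposition~\ref{structure} entirely. The paper proves the forward direction by first invoking the structural form $T(w)=\sum_i T_i(w)\otimes e_i$ with mutually orthogonal ranges (established in Proposition~\ref{structure}) and then computing; for the converse it reads off the component identities from the isometry and feeds them back into the converse half of Proposition~\ref{structure}. Your proof instead treats both directions symmetrically through a single equivalence (diagonal conditions $\Leftrightarrow$ sesquilinear conditions), with polarization doing the work that in the paper is hidden inside the proof of Proposition~\ref{structure} (the two choices $z=\tfrac12$ and $z=\tfrac{i}{2}$ there are precisely a polarization computation). What you gain is a self-contained argument that does not rely on the structure theorem and that makes the dimension bound $\dim U^{\text{AB}}\le\lfloor d_\text{A}/d_\text{B}\rfloor$ fall out immediately from injectivity; what the paper's approach buys is a concrete description of the subspace in terms of the unitaries $T_i$, which is useful later for the additivity proof.
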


\begin{proof}
Let $d=\dim U^\text{AB}$ and let the notation be as in Proposition~\ref{structure}. Thus, 
if $U^\text{AB}$ is maximally entangled and if
$e_1,...,e_{d_\text{B}}$ is an orthonormal basis of $\mathcal{H}^\text{B}$ then an element
of $U^\text{AB}$ is of the form
$$
T(w)=\sum_{i=1}^{d_\text{B}}T_i(w)\otimes e_i\;,
$$
with $T_i$ a unitary operator from $\mathbb{C}^d$ onto a subspace $U_i$ of $\mathcal{H}^\text{A}$
and $U_i$ and $U_j$ are orthogonal for all $i\neq j$. We now calculate
$$
\langle (I\otimes X)T(w)|(I\otimes Y)T(z)\rangle=\sum_{i,j}\langle T_i(w)|T_j(z)\rangle\;\langle Xe_i|Ye_j\rangle\;.
$$
Now, since $\langle T_i(w)|T_j(z)\rangle=\delta_{ij}\langle w|z\rangle$ (see Proposition~\ref{structure})
we have:
\begin{align}
\langle (I\otimes X)T(w)|(I\otimes Y)T(z)\rangle & =\sum_{i,j}{\delta_{ij}\langle w|z\rangle}\;\langle Xe_i|Ye_j\rangle\nonumber\\
& ={\langle w|z\rangle\text{Tr}(X^\dag Y)}\;.\nonumber
\end{align}
That is, we proved that if $U^\text{AB}$ is maximally entangled then the map is an isometry.  For the converse we note
that we have an isometry of $\mathbb{C}^d$ onto $U^\text{AB}$ given by
$$
T(w)=\sum_{i=1}^{d_\text{B}} T_i(w)\otimes e_i\;.
$$
Now, if the map defined in the proposition is an isometry then
$$
\langle (I\otimes X)T(w)|(I\otimes Y)T(z)\rangle = {\langle w|z\rangle\text{Tr}(X^\dag Y)}\;.
$$
That is,
$$
\sum_{i,j}\langle T_i(w)|T_j(z)\rangle\;\langle Xe_i|Ye_j\rangle
={\langle w|z\rangle\text{Tr}(X^\dag Y)}\;,
$$
for all $X,Y\in\text{End}(\mathcal{H}^\text{B})$. Hence, we must have 
$\langle T_i(w)|T_j(z)\rangle=\delta_{ij}\langle w|z\rangle /$
and from Proposition~\ref{structure} the subspace $U^\text{AB}$ is maximally entangled.
\end{proof}

\subsection{Additivity of maximally entangled subspaces}

We now discuss the additivity properties of maximally entangled subspaces. 
\begin{proposition}\label{additive}
Let $U^\text{AB}\subset \mathcal{H}^\text{A}{\otimes}\mathcal{H}^\text{B}$ and 
$V^{\text{A}'\text{B}'}\subset \mathcal{H}^{\text{A}'}{\otimes}\mathcal{H}^{\text{B}'}$ be 
maximally entangled subspaces. Then,
\begin{align}
\mathcal{E}\left(U^\text{AB}\otimes V^{\text{A}'\text{B}'}\right)
&=\mathcal{E}\left(U^\text{AB}\right)+\mathcal{E}\left(V^{\text{A}'\text{B}'}\right)\nonumber\\
&=\log m+\log m'\;,
\end{align} 
where $m\equiv\min\{d_\text{A},\;d_\text{B}\}$ and 
$m'\equiv\min\{d_{\text{A}'},\;d_{\text{B}'}\}$.
\end{proposition}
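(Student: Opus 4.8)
The plan is to bound $\mathcal{E}(U^\text{AB}\otimes V^{\text{A}'\text{B}'})$ from both sides in the bipartition $\text{AA}'|\text{BB}'$, where the first factor is $\mathcal{H}^\text{A}\otimes\mathcal{H}^{\text{A}'}$ and the second is $\mathcal{H}^\text{B}\otimes\mathcal{H}^{\text{B}'}$. The upper bound $\mathcal{E}(U^\text{AB}\otimes V^{\text{A}'\text{B}'})\leq\log m+\log m'$ is the easy direction already noted after Proposition~\ref{additive}'s statement: if $u_\ast$ and $v_\ast$ minimize $E$ over $U^\text{AB}$ and $V^{\text{A}'\text{B}'}$, then $u_\ast\otimes v_\ast$ lies in the product subspace and, since it is a product across the $\text{AA}'|\text{BB}'$ cut, $E(u_\ast\otimes v_\ast)=E(u_\ast)+E(v_\ast)=\log m+\log m'$. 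Hence the entire content is the matching lower bound: $E(\chi)\geq\log(mm')$ for every unit vector $\chi\in U^\text{AB}\otimes V^{\text{A}'\text{B}'}$.

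The main step is to invoke Proposition~\ref{structure} for each factor. It supplies orthonormal bases $e_1,\dots,e_m$ of $\mathcal{H}^\text{B}$ and $f_1,\dots,f_{m'}$ of $\mathcal{H}^{\text{B}'}$, mutually orthogonal subspaces $U_1,\dots,U_m\subset\mathcal{H}^\text{A}$ and $V_1,\dots,V_{m'}\subset\mathcal{H}^{\text{A}'}$, and unitaries $T_i:\mathbb{C}^d\to U_i$, $S_j:\mathbb{C}^{d'}\to V_j$, so that $U^\text{AB}=\{\sum_i T_i w\otimes e_i\mid w\in\mathbb{C}^d\}$ and $V^{\text{A}'\text{B}'}=\{\sum_j S_j w'\otimes f_j\mid w'\in\mathbb{C}^{d'}\}$. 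Taking orthonormal bases $\psi_a$ of $U^\text{AB}$ and $\phi_b$ of $V^{\text{A}'\text{B}'}$ as in that proof, a general unit vector is $\chi=\sum_{a,b}c_{ab}\,\psi_a\otimes\phi_b$ with $\sum_{a,b}|c_{ab}|^2=1$. I would reorganize the four tensor factors into the $\text{AA}'|\text{BB}'$ order, expressing $\chi$ as a sum of terms $(T_i u_a\otimes S_j u'_b)\otimes(e_i\otimes f_j)$, and then compute $\rho_{\text{BB}'}=\text{Tr}_{\text{AA}'}|\chi\rangle\langle\chi|$. The key point is that mutual orthogonality of the $U_i$ together with unitarity of the $T_i$ forces $\langle T_{i'}u_{a'}|T_i u_a\rangle=\delta_{ii'}\delta_{aa'}$, and likewise $\langle S_{j'}u'_{b'}|S_j u'_b\rangle=\delta_{jj'}\delta_{bb'}$; these Kronecker deltas collapse all cross terms, leaving
\[
\rho_{\text{BB}'}=\frac{1}{mm'}\sum_{i,j}|e_i\otimes f_j\rangle\langle e_i\otimes f_j|,
\]
a multiple of a rank-$mm'$ projection, i.e.\ maximally mixed on an $mm'$-dimensional space. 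Thus $E(\chi)=S(\rho_{\text{BB}'})=\log(mm')$ for \emph{every} $\chi$, and with the upper bound this yields $\mathcal{E}(U^\text{AB}\otimes V^{\text{A}'\text{B}'})=\log(mm')=\log m+\log m'$.

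The step I expect to be the main obstacle is the reduced-density-matrix bookkeeping after the reorganization, together with one genuine subtlety. The clean cancellation above relies on tracing out the two \emph{larger} spaces $\mathcal{H}^\text{A},\mathcal{H}^{\text{A}'}$, which is exactly the situation when the larger member of each pair lies on the same side, consistent with the convention $d_\text{A}\geq d_\text{B}$, $d_{\text{A}'}\geq d_{\text{B}'}$ used above; then $mm'=\min\{d_\text{A}d_{\text{A}'},\,d_\text{B}d_{\text{B}'}\}$ and the argument in fact shows the stronger fact that $U^\text{AB}\otimes V^{\text{A}'\text{B}'}$ is itself maximally entangled. If the larger factors instead lie on opposite sides, the trace retains a large space, the cross terms survive, and one finds $\rho_{\text{BB}'}=\tfrac1m I_\text{B}\otimes\sigma_{\text{B}'}$ with $E(\chi)=\log(mm')+H(\{\mu_k\})$, where the $\mu_k$ are the squared Schmidt coefficients of $\chi$ across the $\text{AB}|\text{A}'\text{B}'$ cut. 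This quantity is still $\geq\log(mm')$, with equality precisely on product vectors $u\otimes v$, so the minimum over the subspace is again $\log(mm')$ even though the subspace is no longer maximally entangled in the enlarged system.

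Either way the minimum equals $\log m+\log m'$, completing the proof. To avoid the case distinction and shorten the inner-product manipulations, the alternative I would actually take is to route the computation through the isometry criterion of Proposition~\ref{isometry}: expressing the reduced density matrix in terms of the maps $X\otimes u\mapsto\sqrt{d_\text{B}}(I\otimes X)u$ packages the orthogonality relations $\langle T_i(w)|T_j(z)\rangle=\delta_{ij}\langle w|z\rangle$ automatically, which trivializes the bookkeeping that is otherwise the delicate part of the argument.
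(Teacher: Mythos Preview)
Your proof is correct and follows essentially the same route as the paper's: both invoke the structure theorem (Proposition~\ref{structure}) and then split into the two cases according to whether the larger factors of the two bipartitions lie on the same side or on opposite sides. In the same-side case the paper appeals to the converse part of Proposition~\ref{structure} while you compute $\rho_{\text{BB}'}$ directly; in the opposite-side case the paper parametrizes via a Schmidt-like expansion with one orthonormal and one general family and analyzes the matrix $B=[\langle w_i|w_j\rangle]$, which amounts to exactly your computation that $E(\chi)=\log(mm')+H(\{\mu_k\})$ with $\mu_k$ the squared Schmidt coefficients across the $\text{AB}|\text{A}'\text{B}'$ cut. Your suggested shortcut through Proposition~\ref{isometry} is not used in the paper but is a legitimate way to package the same orthogonality relations.
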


\begin{remark}
From the above proposition it follows that if $d_{A}\geq d_{B}$ and $d_{\text{A}'}\geq d_{\text{B}'}$ 
or $d_{B}\geq d_{A}$ and $d_{\text{B}'}\geq d_{\text{A}'}$ then 
$U^\text{AB} {\otimes} V^{\text{A}'\text{B}'}$ is maximally entangled in 
$(\mathcal{H}^\text{A}{\otimes}\mathcal{H}^{\text{A}'}){\otimes}(\mathcal{H}^\text{B}{\otimes}\mathcal{H}^{\text{B}'})$.
However, if for example $d_{A} > d_{B}$ and $d_{\text{A}'} < d_{\text{B}'}$  
then $U^\text{AB} {\otimes} V^{\text{A}'\text{B}'}$ 
is NOT maximally entangled in 
$(\mathcal{H}^\text{A}{\otimes}\mathcal{H}^{\text{A}'}){\otimes}(\mathcal{H}^\text{B}{\otimes}\mathcal{H}^{\text{B}'})$
because  $mm'<\min\{d_\text{A}d_{\text{A}'},\;d_\text{B}d_{\text{B}'}\}$.
\end{remark}

\begin{proof}
There are basically two possibilities (up to interchanging factors): the first
is $d_{A}\geq d_{B}$ and $d_{\text{A}'}\geq d_{\text{B}'}$, and the second is
$d_{A}\geq d_{B}$ and $d_{\text{A}'} < d_{\text{B}'}$.

In the first case we have as in the statement of proposition~\ref{structure} 
the subspaces $U_{j}$ and the unitaries $T_{j}:\mathbb{C}^{d}\rightarrow U_{j}$ such that 
$U^\text{AB}=\{\sum_{i} T_{i}w {\otimes}e_{i}|w\in\mathbb{C}^{d}\}$. 
We also have the orthonormal basis $f_{i}$ of
$\mathcal{H}^{\text{B}'}$, the subspaces $V_{j}$ and the unitaries 
$S_{j}:\mathbb{C}^{d^{\prime}}\rightarrow V_{j}$ such that 
$V^{\text{A}'\text{B}'}=\{\sum_{i}S_{i}w'\otimes f_{i}|w'\in\mathbb{C}^{d^{\prime}}\}$. 
Thus, as a subspace of 
$(\mathcal{H}^\text{A}{\otimes}\mathcal{H}^{\text{A}'}){\otimes}(\mathcal{H}^\text{B}{\otimes}\mathcal{H}^{\text{B}'})$, 
$U^\text{AB} {\otimes} V^{\text{A}'\text{B}'}$ is spanned by the elements
$$
{\sum_{i,j}}(T_{i}w{\otimes}S_{j}w^{\prime}){\otimes}(e_{i}{\otimes}f_{j}).
$$
Thus if we identify $\mathbb{C}^{d}{\otimes}\mathbb{C}^{d^{\prime}}$ with 
$\mathbb{C}^{dd^{\prime}}$ then the converse assertion in proposition~\ref{structure} implies that 
$U^\text{AB} {\otimes} V^{\text{A}'\text{B}'}$
is a maximally entangled space. This implies that 
$$
\mathcal{E}\left(U^\text{AB} {\otimes} V^{\text{A}'\text{B}'}\right)
=\log d_\text{B} +\log d_{\text{B}'}=\log m+\log m'\;.
$$

We now consider the second case. For $U^\text{AB}$ we have exactly as above
$U^\text{AB}=\{\sum_{i} T_{i}w {\otimes}e_{i}|w\in\mathbb{C}^{d}\}$. For 
$V^{\text{A}'\text{B}'}$ we denote by $f_j$ an orthonormal basis of 
$\mathcal{H}^{\text{A}'}$ (not of $\mathcal{H}^{\text{B}'}$ as above).
Thus, according to proposition~\ref{structure} we have 
$V^{\text{A}'\text{B}'}=\{\sum_{i}f_{i}\otimes S_{i}w'|w'\in\mathbb{C}^{d^{\prime}}\}$.
As a subspace of 
$(\mathcal{H}^\text{A}{\otimes}\mathcal{H}^{\text{A}'}){\otimes}(\mathcal{H}^\text{B}{\otimes}\mathcal{H}^{\text{B}'})$, 
$U^\text{AB} {\otimes} V^{\text{A}'\text{B}'}$ is spanned by the elements
$$
{\sum_{i,j}}(T_{i}w{\otimes}f_{j}){\otimes}(e_{i}{\otimes}S_{j}w^{\prime}).
$$
We will assume first that $d^{\prime}\leq d$. Let $w_{1}^{\prime
},...,w_{d^{\prime}}^{\prime}$ be an orthonormal basis of 
$\mathbb{C}^{d^{\prime}}$. Thus, if $\phi$ is a state in $U^\text{AB} {\otimes} V^{\text{A}'\text{B}'}$ 
we can write it as
$$
\phi={\sum_{i,j,k}}(T_{i}w_{k}{\otimes}f_{j}){\otimes}(e_{i}{\otimes}S_{j}w_{k}^{\prime})\;,
$$
where $w_k$ are some non-normalized vectors in $\mathbb{C}^d$. Furthermore, 
$$
\left\langle \phi|\phi\right\rangle =d_\text{B}d_{\text{A}'}{\sum_{k}}
\left\Vert u_{k}\right\Vert ^{2}\;.
$$
Hence, if $\phi$ is normalized then
$$
{\sum_{k}}
\left\Vert u_{k}\right\Vert ^{2}=\frac{1}{d_\text{B}d_{\text{A}'}}\;.
$$
Now, since $S_{j}w_{k}^{\prime}$ is an orthonormal  set of vectors for all $j$ and $k$
(see proposition~\ref{structure}), the entanglement of $\phi$ as an element of
$(\mathcal{H}^\text{A}{\otimes}\mathcal{H}^{\text{A}'}){\otimes}(\mathcal{H}^\text{B}{\otimes}\mathcal{H}^{\text{B}'})$
is given by
$$
d_\text{B}d_{\text{A}'}S(B)
$$
where $B=\left[  \left\langle w_{i}|w_{j}\right\rangle \right]  _{1\leq
i,j\leq d^{\prime}}$, and if $\lambda_{1},...,\lambda_{d^{\prime}}$ are the
eigenvalues of $B$ then the von-Neumann entropy of $B$ is
$$
S(B)=-\sum\lambda_{i}\log\lambda_{i}\;.
$$
Now $B$ is the most general $d^{\prime}\times d^{\prime}$ self adjoint,
positive semidefinate matrix with trace $1/d_\text{B}d_{\text{A}'}$. The minimum of
the entropy for such matrices is
$$
\frac{\log(d_{a}d_{B})}{d_{a}d_{B}}.
$$
This proves the proposition for the case $d'\leq d$. If $d<d^{\prime}$ then we can prove the proposition
by using the same argument, this time with $w_{k}$ an orthonormal basis of $\mathbb{C}^{d}$ and  
with $B^{\prime}=\left[  \left\langle w_{i}^{\prime}|w_{j}^{\prime}\right\rangle \right]  _{1\leq i,j\leq d}$. 
This completes the proof.
\end{proof}

The above proposition also shows that the entanglement of formation is additive for
bipartite states with maximally entangled support. 
If $\rho$ is a mixed state in $\mathcal{H}^\text{A}{\otimes}\mathcal{H}^\text{B}$ 
then the entanglement of formation is defined in terms of the convex roof extension:
$$
E_\text{F}(\rho)=\min\sum p_{i}E(\phi_{i})
$$
where the minimum taken over all decompositions
$$
\rho=\sum p_{i}\phi_{i}{\otimes}\phi_{i}^{\ast}
$$
with $\phi_{i}$ a pure bipartite state and $p_{i}>0$ and $\sum p_{i}=1$.

\begin{corollary}
Let $\rho$ and $\sigma$ be mixed states in 
$\mathcal{B}(\mathcal{H}^\text{A}{\otimes}\mathcal{H}^\text{B})$ and 
$\mathcal{B}(\mathcal{H}^{\text{A}'}{\otimes} \mathcal{H}^{\text{B}'})$, respectively. 
If the support subspaces $S_\rho$ and $S_\sigma$ are maximally entangled
then
$$
E_\text{F}(\rho\otimes\sigma)=E_\text{F}(\rho)+E_\text{F}(\sigma)\;.
$$
\end{corollary}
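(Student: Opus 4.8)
The plan is to sandwich $E_\text{F}(\rho\otimes\sigma)$ between a lower and an upper bound that both evaluate to $\log m+\log m'$, where $m=\min\{d_\text{A},d_\text{B}\}$ and $m'=\min\{d_{\text{A}'},d_{\text{B}'}\}$. The only nontrivial ingredient is Proposition~\ref{additive}; everything else is assembled from the general observation, already recorded in the paper, that the entanglement of the support lower-bounds the entanglement of formation. Concretely, in any decomposition $\rho=\sum_i p_i\,\phi_i\otimes\phi_i^{*}$ each pure state $\phi_i$ lies in the support $S_\rho$, so $E(\phi_i)\geq\mathcal{E}(S_\rho)$, and averaging gives $E_\text{F}(\rho)\geq\mathcal{E}(S_\rho)$; the same bound applies to $\sigma$ and to $\rho\otimes\sigma$.

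First I would pin down $E_\text{F}(\rho)$ and $E_\text{F}(\sigma)$ exactly. Since $S_\rho$ is maximally entangled we have $\mathcal{E}(S_\rho)=\log m$, which is the largest entanglement attainable by \emph{any} pure state of $\mathcal{H}^\text{A}\otimes\mathcal{H}^\text{B}$. Because $\min_{\psi\in S_\rho}E(\psi)=\log m$ while $E(\psi)\leq\log m$ for every unit vector, every unit vector of $S_\rho$ must in fact have entanglement exactly $\log m$. Consequently every decomposition of $\rho$ gives $\sum_i p_i E(\phi_i)=\log m$, so $E_\text{F}(\rho)=\mathcal{E}(S_\rho)=\log m$; likewise $E_\text{F}(\sigma)=\mathcal{E}(S_\sigma)=\log m'$.

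Next I would establish the two bounds on the product. The support of $\rho\otimes\sigma$, viewed as a state of $(\mathcal{H}^\text{A}\otimes\mathcal{H}^{\text{A}'})\otimes(\mathcal{H}^\text{B}\otimes\mathcal{H}^{\text{B}'})$, is $S_\rho\otimes S_\sigma$. Applying the support lower bound to $\rho\otimes\sigma$ and then Proposition~\ref{additive} yields $E_\text{F}(\rho\otimes\sigma)\geq\mathcal{E}(S_\rho\otimes S_\sigma)=\mathcal{E}(S_\rho)+\mathcal{E}(S_\sigma)=\log m+\log m'$. For the reverse inequality I would invoke the standard subadditivity of the convex roof: tensoring any decompositions of $\rho$ and $\sigma$ produces a product decomposition of $\rho\otimes\sigma$, so $E_\text{F}(\rho\otimes\sigma)\leq E_\text{F}(\rho)+E_\text{F}(\sigma)=\log m+\log m'$. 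The two bounds coincide, forcing equality $E_\text{F}(\rho\otimes\sigma)=E_\text{F}(\rho)+E_\text{F}(\sigma)$.

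There is no serious obstacle here, since the hard analytic work lives entirely in Proposition~\ref{additive}, which I am assuming. The one point demanding care is consistency of the bipartition: $E_\text{F}(\rho\otimes\sigma)$ must be computed for exactly the grouping $(\mathcal{H}^\text{A}\otimes\mathcal{H}^{\text{A}'})$ against $(\mathcal{H}^\text{B}\otimes\mathcal{H}^{\text{B}'})$ that underlies Proposition~\ref{additive}. It is also worth stressing the subtlety highlighted in the Remark: Proposition~\ref{additive} delivers the value $\mathcal{E}(S_\rho\otimes S_\sigma)=\log m+\log m'$ even in a mixed case such as $d_\text{A}>d_\text{B}$ and $d_{\text{A}'}<d_{\text{B}'}$, where $S_\rho\otimes S_\sigma$ is itself \emph{not} maximally entangled in the combined space. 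The lower-bound step uses only this numerical value of $\mathcal{E}$ and never the maximal entanglement of the product support, so the argument goes through in full generality.
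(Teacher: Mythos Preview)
Your argument is correct and coincides with the paper's own route: the paper's one-line justification is precisely that $E_\text{F}(\rho)=\mathcal{E}(S_\rho)$ for states with maximally entangled support, after which Proposition~\ref{additive} and the sandwich $\mathcal{E}(S_\rho\otimes S_\sigma)\le E_\text{F}(\rho\otimes\sigma)\le E_\text{F}(\rho)+E_\text{F}(\sigma)$ finish the job. Your write-up simply makes these implicit steps explicit, and your observation that the lower bound only needs the numerical value from Proposition~\ref{additive} (not maximal entanglement of the product support) is a worthwhile clarification the paper leaves to the reader.
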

The proof of this corollary follows directly from the fact that for states with maximally
entangled support $E_\text{F}(\rho)=\mathcal{E}(S_\rho)$.
Note that the class of mixed states with maximally entangled support is extremely small
(i.e. of measure zero). In particular, it is a much smaller class than the one found by
Vidal, Dur and Cirac~\cite{Vid02} .

\section{Error Correcting Codes}

\subsection{Definitions}

We consider error correcting codes that are used to encode $l$ qubits in $n\geq l$ qubits in such
a way that they can correct errors on any subset of $k$ or fewer qubits. These codes, which we call
$(n,l,k)$ error correcting codes, 
can be classified into two classes (for example see~\cite{NC}): degenerate and 
non-degenerate (or orthogonal) codes. We start with a general definition of error correcting codes that is 
equivalent to the definition given (for example) in~\cite{NC} , but here we define the codes in terms 
of subspaces. 

\begin{definition}
Let $X\in End(\otimes^{k}\mathbb{C}^{2})$ and 
$0\leq i_{0}<i_{1}<...<i_{k-1} \leq n-1$. The operator
$X_{i_{0}i_{1} \cdots i_{k-1}}$ on $\otimes^{n}\mathbb{C} ^{2}$, that represents
the errors on the k qubits $i_{1},...,i_{k-1}$, is defined by
$X_{i_{0}...i_{k-1}}v=\sigma (X\otimes I)\sigma^{-1}v$, where {\rm (a)} $\sigma\in S_{n}$ (acting on $\{0,1,...,n-1\}$ by
permutations) is defined such that $\sigma(j)=i_{j}$, {\rm (b)} $\sigma$ can
act on $\otimes^{n}\mathbb{C} ^{2}$ by $\sigma(v_{0}\otimes
v_{1}\otimes\cdots\otimes v_{n-1})=v_{\sigma (0)}\otimes
v_{\sigma (1)}\otimes\cdots\otimes v_{\sigma(n-1)}$ and {\rm (c)} 
$\otimes ^{n} \mathbb{C}^{2}$ is viewed as $(\otimes^{k} \mathbb{C}
^{2})\otimes(\otimes^{n-k} \mathbb{C} ^{2})$ (putting together the
$k$ tensor factors that correspond to the $k$ qubits $i_{1},...,i_{k-1}$ and the rest $n-k$ tensor factors). 
An $(n,l,k)$ error correcting code is defined from its following ingredients:

I. An isometry
$T:\otimes^{l}\mathbb{C}^{2}\rightarrow\otimes^{n}\mathbb{C} ^{2}$.

II. Let $V_{0}=T(\otimes^{l}\mathbb{C} ^{2})$. There are
$V_{1},...,V_{d}$ mutually orthogonal subspaces of
$\otimes^{n}\mathbb{C} ^{2}$ that are also orthogonal to $V_{0}$.

III. For each $V_{j}$ there is a unitary isomorphism, $U_j$, of $V_{j}$ onto $V_{0}$
with $U_{0}=I$.

IV. $X_{i_{0}i_{1}\cdots
i_{k-1}}V_{0}\subset\oplus_{j=0}^{d}V_{j}$.

V. Let $P_{j}$ be the ortogonal projection of $\otimes^{n}\mathbb{C}
^{2}$ onto $V_{j}$ then if $v\in V_{0}$ is a unit vector and $P_{j}
(X_{i_{0}i_{1}\cdots i_{k-1}}v)\neq0$ then
\[U_{j}\frac{P_{j}(X_{i_{0}i_{1}\cdots i_{k-1}}v)}{\left\Vert P_{j}
(X_{i_{0}i_{1}\cdots i_{k-1}}v)\right\Vert }\] equals $v$ up to
a phase.
\end{definition}

In the next subsection we study Shor's $(9,1,1)$ error correcting code and show that it 
satisfies this definition. However, before that, let us introduce the notion of 
$k$-\emph{totally entangled} subspaces which will play an important role in our 
discussion of QECC.

\begin{definition}
Let $\mathcal{H}$ be the space of $n$ qubits, $\otimes ^{n}\mathbb{C}^2$. Corresponding 
to any choice of $k$ qubits (tensor factors) we can consider 
$\mathcal{H}=\mathcal{H}^\text{A}\otimes\mathcal{H}^\text{B}$ with 
$\mathcal{H}^\text{A}=\otimes ^{n-k}\mathbb{C}^2$ and
$\mathcal{H}^\text{B}=\otimes ^{k}\mathbb{C}^2$. For $k\leq n/2$ we will say that a subspace, $V$,
of $\mathcal{H}$ is $k$-\emph{totally entangled} if it is maximally entangled relative to every decomposition
of $\mathcal{H}$ as above.
\end{definition}

It is interesting to note that all the subspaces spanned by
the logical codewords of the different non-degenerate error correcting codes
given in~\cite{Ste96,Ben96,Laf96}  are 2-totally entangled subspaces.
On the other hand, the subspaces spanned by
the logical codewords of degenerate codes, like Shor's 9 qubits code, are 
in general only partially maximally entangled subspaces (i.e. maximally entangled
for some choices of $k$ qubits but not for all choices).
In the following subsections we will see the reason for that.

\subsection{Analysis of Shor's 9 qubits code}

We start with the following notations.
We set $u_{\pm}=\frac{1}{\sqrt{2}}(\left\vert 000\right\rangle \pm\left\vert
111\right\rangle )$ so that the two logical codewords in Shor's 9 qubit code are
$v_{+}=u_{+}\otimes u_{+}\otimes u_{+}$ and $v_{-}=u_{-}\otimes u_{-}\otimes
u_{-}$. The subspace spanned by these codewords is denoted by
$V_{0}=\mathbb{C}v_{+} \oplus \mathbb{C}v_{-}$. We also denote
$u_{\pm}^0=(|100\rangle\pm|011\rangle)/\sqrt{2}$, $u_{\pm}^1=(|010\rangle\pm|101\rangle)/\sqrt{2}$
and $u_{\pm}^2=(|001\rangle\pm|110\rangle)/\sqrt{2}$.

Using these notations, we define $21$ mutually orthogonal $2$ dimensional subspaces
orthogonal to $V_{0}$: 

$V_{1}=\mathbb{C}u_{-}\otimes u_{+}\otimes
u_{+}\oplus\mathbb{C}u_{+}\otimes u_{-}\otimes u_{-}$,

$V_{2}= \mathbb{C}u_{+}\otimes u_{-}\otimes
u_{+}\oplus\mathbb{C}u_{-}\otimes u_{+}\otimes u_{-}$,

$V_{3}=\mathbb{C}u_{+}\otimes u_{+}\otimes
u_{-}\oplus\mathbb{C}u_{-}\otimes u_{-}\otimes u_{+}$,

$V_{4+i}=\mathbb{C}u_{+}^{i}\otimes u_{+}\otimes
u_{+}\oplus\mathbb{C} u_{-}^{i}\otimes u_{-}\otimes u_{-}$, for
$i=0,1,2$,

$V_{7+i}=\mathbb{C}u_{+}\otimes u_{+}^{i}\otimes u_{+}\oplus
\mathbb{C} u_{-}\otimes u_{-}^{i}\otimes u_{-}$, for $i=0,1,2$,

$V_{10+i}=\mathbb{C}u_{+}\otimes u_{+}\otimes u_{+}^{i}\oplus
\mathbb{C}u_{-}\otimes u_{-}\otimes u_{-}^{i}$, for $i=0,1,2$,

$V_{13+i}=\mathbb{C}u_{-}^{i}\otimes u_{+}\otimes
u_{+}\oplus\mathbb{C}u_{+}^{i}\otimes u_{-}\otimes u_{-}$, for
$i=0,1,2$,

$V_{16+i}=\mathbb{C}u_{+}\otimes u_{-}^{i}\otimes
u_{+}\oplus\mathbb{C} u_{-}\otimes u_{+}^{i}\otimes u_{-}$, for
$i=0,1,2$,

$V_{19+i}=\mathbb{C}u_{+}\otimes u_{+}\otimes
u_{-}^{i}\oplus\mathbb{C} u_{-}\otimes u_{-}\otimes u_{+}^{i}$, for
$i=0,1,2$.

\bigskip
If
$X\in End(\mathbb{C}^{2})$ (linear maps of $ \mathbb{C}^{2}$ to
$\mathbb{C}^{2}$) then we denote by $X_{i}$ the linear map of
$\otimes^{9}\mathbb{C}^{2}$ to itself that is the tensor product of
the identity of $\mathbb{C}^{2}$ in every tensor factor but the
$i$-th and is $X$ in the $i$-th factor
thus
\[X_{0}=X\otimes I\otimes I\otimes I\otimes I\otimes I\otimes I\otimes I\otimes
I,\]
\[X_{1}=I\otimes X\otimes I\otimes I\otimes I\otimes I\otimes I\otimes I\otimes I\] etc.

Then we have (here $\left\lfloor x\right\rfloor =\max\{m|m\leq
x,m\in\mathbb{Z}\})$

\[X_{i}V_{0}\subset V_{0}\oplus V_{\left\lfloor i/3\right\rfloor+1 }\oplus
V_{i+4}\oplus V_{i+13},0\leq i\leq8.\] We choose an observable $R$
with
\[R_{|V_{i}}=\lambda_{i}I,0\leq i\leq21\]
and
\[R_{|W}=\mu I,\]
where $W$ is the orthogonal complement of $\oplus_{i=0}^{21}V_{i}$ and $\lambda_{i}\neq\lambda_{j}$ for $i\neq j$ and
$\lambda_{i}\neq\mu$ for any $i$. We define a unitary operator
$U_{j}:V_{j}\rightarrow V_{0} $ as follows:
we denote the Pauli matrices by
\[A_{1}=\left[
\begin{array}
[c]{cc}
1 & 0\\
0 & -1
\end{array}
\right]  ,A_{2}=\left[
\begin{array}
[c]{cc}
0 & 1\\
1 & 0
\end{array}
\right]  ,A_{3}=\left[
\begin{array}
[c]{cc}
0 & i\\
-i & 0
\end{array} 
\right]  ,
\]
and then define $U_{0}=I$, $U_{i}=(A_{1})_{3i-1}$, for $i=1,2,3$, $U_{i}=(A_{2})_{i-4}$, 
for $4\leq i\leq12$ and $U_{i}=(A_{3})_{i-13}$, for $13\leq i\leq21$. This gives an one qubit
error correcting code since if $v\in V_{0}$ is a state and if we have an error
in the $i$-th position then we will have
\[
X_{i}v\in V_{0}\oplus V_{\left\lfloor i/3 \right\rfloor+1 }\oplus
V_{i+4}\oplus V_{i+13}\;.
\]
Thus, if we measure the observable $R$ on $X_{i}v$ then the measurement will
yield one of $\lambda_{j}$ with $j=0,\left\lfloor i/3\right\rfloor+1
,i+4$ or $i+13$ and $X_{i}v$ will have collapsed up to a phase to $U_{j}v$; hence
applying $U_{j}$ will fix the error.

\begin{remark}
Note that the subspace $V_0$ is \emph{not} 2-totally entangled subspace. Nevertheless, $V_0$
has very special properties. In particular, if we group the 9 qubits as $(1,2,3):(4,5,6):(7,8,9)$,
then for any choice of 2 qubits that are not from the same group, the subspace $V_0$ is maximally
entangled with respect to the decomposition between the 2 qubits and the rest 7 qubits. If the 2 qubits
are chosen from the same group then the entanglement of $V_0$ with respect to this decomposition
is $1ebit$. Thus, out of the 36 different decompositions, with respect to 27 of them $\mathcal{E}(V_0)=2ebits$
and with respect to the other 9 decompositions $\mathcal{E}(V_0)=1ebit$.
\end{remark}

\subsection{Orthogonal codes}

We now consider a somewhat more intuitive class of codes known as 
non-degenerate codes which we also name as orthogonal codes.

\begin{definition}\label{orth}
Let
$A_{0}=I,A_{1},A_{2},A_{3}$ the Pauli basis and define $A_{i_{0}i_{1}\cdots
i_{k-1}}^{j_{0}j_{1}\cdots j_{k-1}}$ to be
\[
(A_{j_{0}}\otimes A_{j_{1}}\otimes\cdots\otimes A_{j_{k-1}})_{i_{0}\cdots
i_{k-1}},
\]
where $0\leq j_{r}\leq3$ and $0\leq i_{0}<i_{1}<...<i_{k-1}\leq n-1$. Let $\Sigma$
be the set of distinct operators of the form $A_{i_{0}i_{1}\cdots i_{k-1}
}^{j_{0}j_{1}\cdots j_{k-1}}$. Then an orthogonal $(n,l,k)$ code is an $(n,l,k)$ error correcting code
such that if we label $\Sigma$ as the set of $d+1$ operators $S_{0}=I,S_{1},...,S_{d}$ then $V_{j}
=S_{j}V_{0}$. 
\end{definition}

Note that $\Sigma$ has
\[
d+1=\sum_{r=0}^{k}3^{r}\binom{n}{r}
\]
elements. Thus, a necessary condition that there exist an $(n,l,k)$ code
is the quantum Hamming bound~\cite{NC}:
\[
\sum_{r=0}^{k}3^{r}\binom{n}{r}\leq2^{n-l}\text{.}
\]

\begin{proposition}
A $2^{l}$ dimensional subspace $V$ of $\otimes^{n}\mathbb{C} ^{2}$
is the $V_{0}$ of an $(n,l,k)$-orthoganal error correcting code if and only if
$V$ is $2k$-totally entangled. 
\end{proposition}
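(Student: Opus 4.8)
The plan is to prove both directions by connecting the error-correction conditions (the Knill--Laflamme type conditions implicit in Definition of an orthogonal code) to the maximal-entanglement criterion of Proposition~\ref{isometry}. The key observation is that an orthogonal code is determined by requiring the operators $S_0=I,S_1,\dots,S_d$ in $\Sigma$ to map $V_0$ to mutually orthogonal subspaces $V_j=S_jV_0$, each isometric to $V_0$. Since each $S_j$ is (up to normalization) a tensor product of Pauli operators supported on at most $k$ qubits, the orthogonality and isometry conditions on the $V_j$ translate directly into inner-product conditions on vectors in $V_0$ of the form $\langle v\mid S_i^\dagger S_j\, w\rangle$. The heart of the matter is to recognize that the operators $S_i^\dagger S_j$, as $i,j$ range over $\Sigma$, span exactly the space of all operators supported on any fixed set of $2k$ qubits --- this is why $2k$ (and not $k$) enters the statement. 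I would make this spanning claim precise first, since it is the algebraic bridge between the two sides.

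First I would fix a set of $2k$ qubits, write $\mathcal{H}=\mathcal{H}^\text{A}\otimes\mathcal{H}^\text{B}$ with $\mathcal{H}^\text{B}=\otimes^{2k}\mathbb{C}^2$ the chosen factors, and observe that $\text{End}(\mathcal{H}^\text{B})$ has an orthogonal (Hilbert--Schmidt) basis consisting of the $4^{2k}$ tensor products of Pauli matrices on those $2k$ qubits. The next step is the key lemma: every such Pauli operator on the $2k$ qubits can be written as $S_i^\dagger S_j$ (equivalently a product of two $\le k$-qubit Pauli operators) for suitable $S_i,S_j\in\Sigma$, and conversely each $S_i^\dagger S_j$ is supported on at most $2k$ qubits. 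This follows from the combinatorics of splitting a support of size $\le 2k$ into two pieces each of size $\le k$, together with the multiplicative structure of the Pauli group. With this in hand, the operator $I\otimes X$ for $X\in\text{End}(\mathcal{H}^\text{B})$ can be expanded in the $S_i^\dagger S_j$.

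Given the lemma, both directions become reformulations. For the forward direction, suppose $V=V_0$ is the code space of an $(n,l,k)$-orthogonal code. The defining conditions (mutual orthogonality of the $V_j=S_jV_0$ and that each $S_j$ is an isometry on $V_0$) say precisely that $\langle S_i v\mid S_j w\rangle = \delta_{ij}\langle v\mid w\rangle$ for $v,w\in V_0$, i.e. $\langle v\mid S_i^\dagger S_j\,w\rangle=\delta_{ij}\langle v\mid w\rangle$. By the lemma this is equivalent to $\langle v\mid (I\otimes X)\,w\rangle=\tfrac{1}{d_\text{B}}\,\text{Tr}(X)\,\langle v\mid w\rangle$ for all $X\in\text{End}(\mathcal{H}^\text{B})$, which upon polarization and rescaling is exactly the isometry condition of Proposition~\ref{isometry} (note $d_\text{A}=2^{n-2k}\geq 2^{2k}=d_\text{B}$ for $k\le n/2$, so $\sqrt{d_\text{B}}(I\otimes X)$ is the relevant map). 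Hence $V$ is maximally entangled for this decomposition; since the $2k$ qubits were arbitrary, $V$ is $2k$-totally entangled. For the converse, run the same equivalence backward: $2k$-total entanglement gives the isometry condition for every choice of $2k$ factors, which via the lemma yields $\langle v\mid S_i^\dagger S_j\,w\rangle=\delta_{ij}\langle v\mid w\rangle$ for all $S_i,S_j\in\Sigma$, and this is exactly what is needed to define the $V_j$, $U_j$, and the measurement observable satisfying conditions I--V of the code.

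The main obstacle I anticipate is the key lemma, specifically establishing that the products $S_i^\dagger S_j$ exhaust (span) all of $\text{End}(\mathcal{H}^\text{B})$ on the $2k$ qubits without redundancy-induced sign or overcounting problems. One must check that the Pauli operators of weight $\le 2k$ are each hit, that distinct $(i,j)$ do not conspire to violate the diagonal normalization, and that the identity ($i=j$) case gives the trace term correctly. A subtlety is that several pairs $(i,j)$ may yield the same $2k$-qubit Pauli operator, so the equivalence between ``$\langle v\mid S_i^\dagger S_j w\rangle=\delta_{ij}\langle v\mid w\rangle$ for all $i,j$'' and ``$\langle v\mid (I\otimes X)w\rangle=\tfrac1{d_\text{B}}\text{Tr}(X)\langle v\mid w\rangle$ for all $X$'' requires care in the $i\ne j$ but $S_i^\dagger S_j$ nontrivial case versus the $i=j$ case; resolving the bookkeeping there is where the real work lies, while everything else is a direct translation through Proposition~\ref{isometry}.
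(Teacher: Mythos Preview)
Your plan is correct and follows essentially the same route as the paper: both arguments hinge on Proposition~\ref{isometry}, translating the orthogonality condition $\langle S_i v\mid S_j w\rangle=\delta_{ij}\langle v\mid w\rangle$ for $S_i,S_j\in\Sigma$ into the isometry criterion for maximal entanglement on the (at most) $2k$ qubits supporting $S_i^\dagger S_j$. The paper writes out in detail the direction ``$2k$-totally entangled $\Rightarrow$ orthogonal code'' (building the basis $\mathfrak{B}=\{Sv_i\}$ and the observable $R$) and dismisses the converse as ``the same lines in the opposite direction''; your key lemma---that every Pauli on a fixed set of $2k$ qubits factors as $S_i^\dagger S_j$ with $S_i,S_j\in\Sigma$---is exactly what that converse needs and is implicit in the paper's treatment. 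Your anticipated bookkeeping obstacle (multiple pairs $(i,j)$ giving the same Pauli) is not in fact a problem: for each nonidentity Pauli on $2k$ qubits you only need \emph{one} factorisation with $S_i\neq S_j$, and for the other direction you only need that each product $S_i^\dagger S_j$ has support contained in some $2k$-qubit set. One small slip: the inequality $d_{\mathrm A}=2^{n-2k}\ge 2^{2k}=d_{\mathrm B}$ requires $2k\le n/2$ (not $k\le n/2$), but this is already forced by the definition of $2k$-totally entangled.
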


\begin{proof}
Let $V$ be a $2k$-totally entangled subspace in $\mathcal{H}=\otimes ^{n}\mathbb{C}^2$,
and let $X: \otimes ^{k}\mathbb{C}^2\rightarrow\otimes ^k\mathbb{C}^2$ be a linear map
on $k$ qubits. As above, for any $i_0<i_1<...<i_{k-1}$ ($1\leq i_l\leq n$) we denote by 
$X_{i_0i_1...i_{k-1}}$ the operation $X$ on $\mathcal{H}$, when acting on the $k$ qubits 
$i_0,i_1,....,i_{k-1}$ (the rest of the $n-k$ qubits are left "untouched"). 
Let also $\mathcal{Z}\equiv\{X\in\text{End}\left(\otimes ^k\mathbb{C}^2\right)|\text{Tr}X=0\}$
and for any $i_0<...<i_{k-1}$ let $U_{i_0...i_{k-1}}\equiv\{X_{i_0...i_{k-1}}V|X\in\mathcal{Z}\}$.
We define the subspace 
$$
\mathcal{W}=V+\sum_{i_0<...<i_{k-1}}U_{i_0...i_{k-1}}\;.
$$
That is, $\mathcal{W}$ consists of all the possible states after an error on $k$ or less
qubits has been occurred.
Now, let $A_0=I,\;A_1,\;A_2,\;A_3$ be an orthonormal basis
of $\text{End}(\mathbb{C}^2)$ with $A_i$ invertible (e.g. the Pauli basis of $2\times 2$
matrices). As in Definition~\ref{orth}, we denote by $A_{i_0...i_{k-1}}^{j_0...j_{k-1}}$ the operator 
$X_{i_0...i_{k-1}}$ that corresponds to $X=A_{j_0}\otimes\cdots\otimes A_{j_{k-1}}$,
and the set of all such operators we denote by 
$$
\Sigma \equiv \{A_{i_0...i_{k-1}}^{j_0...j_{k-1}}|1\leq i_l\leq n,\;0\leq j_l\leq 3\}\;.
$$
Now, let $v_1,...,v_d$ be an orthonormal basis of $V$ and define
$$
\mathfrak{B}\equiv \{Sv_i\Big|S\in\Sigma,\;1\leq i\leq d\}.
$$
We now argue that $\mathfrak{B}$ is an orthonormal basis of $\mathcal{W}$.
Clearly, the vectors in $\mathfrak{B}$ span $\mathcal{W}$. It is therefore enough
to show that the vectors in $\mathfrak{B}$ are orthogonal. Let $Sv_i$ and $S'v_{j}$
be two vectors in $\mathfrak{B}$ with $S=A_{i_0...i_{k-1}}^{j_0...j_{k-1}}$ and
$S'=A_{i_0'...i_{k-1}'}^{j_0'...j_{k-1}'}$. We denote by $\mathcal{H}^\text{B}$ the 
Hilbert space of the qubits $i_0,...,i_{k-1}$ and $i_0',...,i_{k-1}'$,
and by $\mathcal{H}^\text{A}$ the Hilbert space of the rest of the qubits. Note
that $\mathcal{H}^\text{B}$ consists of at most $2k$ qubits. Now, since $V$ is
$2k$-totally entangled subspace, it is maximally entangled relative to the decomposition
$\mathcal{H}=\mathcal{H}^\text{A}\otimes\mathcal{H}^\text{B}$. Thus, from 
Proposition~\ref{isometry} we clearly have
$$
\langle Sv_i|S'v_j\rangle
={\langle v_i|v_j\rangle\text{Tr}\left(\tilde{S}^{\dag}\tilde{S}'\right)}=\delta_{ij}\delta_{SS'}\;,
$$
where $S\equiv I^\text{A}\otimes\tilde{S}$ and $S'\equiv I^\text{A}\otimes\tilde{S}'$; that is,
$\tilde{S}$ and $\tilde{S}'$ are the projections of $S$ and $S'$ onto 
$\mathcal{H}^\text{B}$, respectively. Hence, $\mathfrak{B}$ is an orthonormal basis
of $\mathcal{W}$.

Since $\mathfrak{B}$ is an orthonormal basis we can construct an observable
(i.e. Hermition operator) $R$ such that for all $v\in V$ $R(Sv)=\lambda_{S}Sv$ with all of the 
$\lambda_{S}$ distinct. We also define $R$ to be zero on the orthogonal complement
to $\mathcal{W}$ in $\mathcal{H}$. Now, suppose that an element $v$ has been changed
by a $k$-qubit transformation yielding $X_{i_0...i_{k-1}}v$. We do a measurment of $R$
and since the image is in $\mathcal{W}$ the outcome is $\lambda_{S}$ for some $S$.
After the measurment, the quantum state is $Sv$ and so we recover $v$ by applying
$S^{-1}$ (actually $S$ if we used the Pauli basis).  The converse follows from the same
lines in the opposite direction. This completes the proof.
\end{proof}

Note that Corollary~\ref{cordim} together with the 
proposition above is consistent with the quantum Singleton 
bound~\cite{Kni97}, $n\geq 4k+l$, which also follows
trivially from the quantum Hamming bound for the case 
of orthogonal codes that we considered in this subsection.

\section{Summery and Conclusions}

We introduced the notion of entanglement of subspaces as a measure that quantify the entanglement of
bipartite states in a randomly selected subspace. We discussed its properties and suggested that
it is  additive. We were not able to prove this conjecture (which is equivalent to the additivity
conjecture of the entanglement of formation) although some numerical tests~\cite{HarXX} supports
that and for maximally entangled subspaces we proved that it is additive. We then extended the definition of
maximally entangled subspaces into $k$-totally entangled subspaces and showed that the later
can play an important role in the study of quantum error correction codes.

We considered both degenerate and non-degenerate codes and showed that the subspace spanned
by the logical codewords of a non-degenerate code is a $k$-totally (maximally) entangled
subspace. This observation, followed by an analysis of the degenerate Shor's nine qubits code in terms of 22 mutually 
orthogonal subspaces, motivated us to define a general (possible degenerate) error correcting code in terms of
subspaces. We believe that further investigation in this direction would lead to a better understanding of degenerate
quantum error correcting codes.

\emph{Acknowledgments:---}
The authors would like to thank Aram Harrow and David Meyer for fruitful
discussions.


\begin{references}

\bibitem{Ben93}
C. H. Bennett, G. Brassard, C. Crepeau, R. Jozsa, A. Peres and 
W. K. Wootters, \prl \textbf{70}, 1895 (1993).

\bibitem{Ben92}
C. H. Bennett and S.J. Wiesner, \prl \textbf{69}, 2881 (1992).

\bibitem{Ple07}
M.~B.~Plenio, S.~Virmani, Quant. Inf. Comp. \textbf{7}, 1 (2007).

\bibitem{Sho04}
P. W. Shor, Commun. Math. Phys. \textbf{246}, 453 (2004).

\bibitem{Gou05}
G. Gour, \pra \text{72}, 022323 (2005). 

\bibitem{Hay04} P. Hayden, quant-ph/0409157

\bibitem{Hay06}
P. Hayden, D. W. Leung and A. Winter, Comm. Math. Phys. \textbf{265}, 95 (2006).

 \bibitem{Abe06}
A. Abeyesinghe, P. Hayden, G. Smith and Andreas Winter,
IEEE Trans. Inform. Theory \textbf{52}, 3635 (2006).

\bibitem{Sho02}
P. Shor, Lecture Notes 2002, http://www.msri.org/
publications/ln/msri/2002/quantumcrypto/shor/1/

\bibitem{Wal02}
N. R. Wallach, "An Unentangled Gleason's theorem", Quantum computation and
information (Washington, DC, 2000), 291--298,Contemp. Math. 305(2002).

\bibitem{Ben99}
C. H. Bennett, D. P. DiVincenzo, T. Mor, P. W. Shor, J. A. Smolin and B. M. Terhal, 
Phys. Rev. Lett. \textbf{82}, 5385 (1999).

\bibitem{DiV03}
D. P. DiVincenzo, T. Mor, P. W. Shor, J. A. Smolin and B. M. Terhal,
Comm. Math. Phys. \textbf{238}, 379 (2003).

\bibitem{AG07}
A. Roy and G. Gour, in preparation.

\bibitem{HarXX}
A. W. Harrow, private communication.

\bibitem{Pag93}
D. N. Page, \prl \textbf{71}, 1291 (1993).

\bibitem{Vid02}
G. Vidal, W. Dur and J. I. Cirac, Phys. Rev. Lett. 89, 027901 (2002).

\bibitem{NC}
M. A. Nielsen and I. L. Chuang, "Quantum Computation and Quantum Information" (Cambridge University Press, 2000).

\bibitem{Ste96}
A. M. Steane, \prl \textbf{77}, 793 (1996); Proc. R. Soc. London A, \textbf{452}, 2551 (1996).

\bibitem{Ben96}
C. H. Bennett, D. P. DiVincenzo, J. A. Smolin and W.~K. Wootters,
Phys. Rev. A \textbf{54}, 3824 (1996).

\bibitem{Laf96}
R. Laflamme, C. Miquel, J. P. Paz and W. H. Zurek, \prl \textbf{77}, 198 (1996).

\bibitem{Sho95}
P. Shor, Phys. Rev. A \textbf{52}, 2493 (1995).

\bibitem{Kni97}
E. Knill and R. Laflamme, \pra \textbf{55}, 900 (1997).
\end{references}
\end{document}